\newtheorem{theorem}{Theorem}[section]
\begin{document}

\title{
 Local Type Checking for Linked Data Consumers
 %Minimal Type Inference for Linked Data
 %You Can't Type the Web; \\ But You Can Type Linked Data
}

\author{
 Gabriel Ciobanu \quad Ross Horne
\institute{
 Romanian Academy, Institute of Computer Science, 
 Blvd. Carol I, no. 8, 700505 Ia{\c{s}}i, Romania
% \footnote{}
}
 \email{gabriel@info.uaic.ro \quad ross.horne@gmail.com}
\and
 Vladimiro Sassone
\institute{
 University of Southampton, Electronics and Computer Science, 
 Southampton, United Kingdom
}
 \email{vs@ecs.soton.ac.uk}
}

\def\titlerunning{
 %Minimal Type Inference for Linked Data
 Local Type Checking for Linked Data Consumers
}
\def\authorrunning{G. Ciobanu, R. Horne \& V. Sassone}

\maketitle

\begin{abstract}
The Web of Linked Data is the cumulation of over a decade of work by the Web standards community in their effort to make data more Web-like. We provide an introduction to the Web of Linked Data from the perspective of a Web developer that would like to build an application using Linked Data. We identify a weakness in the development stack as being a lack of domain specific scripting languages for designing background processes that consume Linked Data. To address this weakness, we design a scripting language with a simple but appropriate type system. In our proposed architecture some data is consumed from sources outside of the control of the system and some data is held locally. Stronger type assumptions can be made about the local data than external data, hence our type system mixes static and dynamic typing. Throughout, we relate our work to the W3C recommendations that drive Linked Data, so our syntax is accessible to Web developers.

\end{abstract}

\section{Introduction}

Linked data is raw data published on the Web that makes use of URIs to establish links between datasets. The use of URIs to identify resources allows data about resources to be looked up (dereferenced) using a simple protocol, and for the data returned to contain more URIs that can also be looked up. Linked Data consumers can crawl the Web of Linked Data to pull in data that can enrich a Web application. For example the 2012 Olympics Website used Linked Data to help journalists discover and organise statistics about relatively unknown medal winners during the games. Due to links bringing down barriers between datasets, the Web of Linked Data is amongst the worlds largest datasets in the hands of Web developers.

We describe a simple, but appropriate architecture for a Linked Data consumer. We then address the problem of designing a type system for this architecture. Linked Data published on the Web from multiple sources is inherently messy, so data arriving over HTTP must be dynamically type checked. Dynamically type checked data is then loaded into a local triple store. The local triple store persists a view of the Web of Linked Data relevant to the Linked Data consumer. Once the consumed Linked Data has been dynamically typed checked, queries and scripts over the local data can be type checked using a mix of dynamic and static type checking. Finally, because we can never have a global view of the Web of Linked Data, we take care to design our type system so that type checking is local. To achieve this we select a useful subset of the W3C standards SPARQL, RDF Schema and OWL to design our type system.

In Section~\ref{section:background}, we describe the Linked Data architecture, with an emphasis on consuming Linked Data. In Section~\ref{section:types} we argue for a notion of type that aligns with the relevant W3C recommendations and is as simple as possible whilst picking up basic programming errors. In Section~\ref{section:system}, we define the syntax of our scripting language for consuming Linked Data and the rules of our type system.

%\section{Application Architecture for Linked Data} %for Linked Data Consumers} % Web Architecture}

\section{From a Web of Documents to a Web of Data}
\label{section:background}

In 1989, Tim Berners-Lee proposed a hypertext system that became the World Wide Web.
In his proposal~\cite{Berners-Lee1989}, he 
observes that hypertext systems from the 1980's failed to gain traction 
because they attempted to justify themselves on their own data. He 
described a simple but effective architecture for exposing existing data from 
file systems and databases as HTML documents that link to 
other documents using URIs. This architecture is still used today to 
present documents that link to documents. %in which form the World Wide Web.

Despite the success of the World Wide Web, Berners-Lee was not completely 
satisfied. He also wanted to make raw data itself Web-like, not just the 
documents that present data. His first vision was called the 
Semantic Web~\cite{Berners-Lee2001}, which was an AI textbook vision of a world 
where intelligent agents would understand data on the Web to do every day 
tasks on our behalf. As admitted by Berners-Lee and his co-author Hendler, 
there was much hype but limited success scaling ideas. %, which are characteristics of an AI winter.
Hendler self-critically asked: ``Where are all 
the intelligent agents?"~\cite{Hendler2007}. By 2006~\cite{Shadbolt2006}, Berners-Lee
%revisited the Semantic Web
had come to the conclusion that there had been too much 
emphasis on deep ontologies and not enough data.

Thus Berners-Lee returned to the grass roots of the Web: the Web developers. 
He described a simple protocol for publishing raw data on the 
Web~\cite{Berners-Lee2006}. The protocol makes use of standards, namely URIs 
as global identifiers, HTTP as a transport layer and RDF as a data format, according to the following principles:
\begin{itemize}
\item use URI to identify resources (i.e.\ anything that might be referred to in data),
\item use HTTP URIs to identify resources so we can look them up (using the HTTP GET verb),
\item when a URI is looked up, return data about the resource using the standards (RDF),
\item include URIs in the data, so they can also be looked up.
\end{itemize}
Data published according to the above protocol is called \textit{Linked 
Data}. An HTTP URI that returns data when it is looked up is a 
\textit{dereferenceable} URI. All URIs that appear in this paper are 
dereferenceable, so are part of this rapidly growing Web of Linked 
Data~\cite{Heath2011}.

The Linked Data protocol is one example of a \textit{RESTful} 
protocol~\cite{Fielding2002}. A RESTful protocol runs directly on the HTTP 
protocol using the HTTP verbs (including GET, PUT and DELETE) which are 
suited to services of publishing data. Many data protocols such as the 
Twitter API\footnote{Twitter API: \url{https://dev.twitter.com/docs/api/1.1}}, Facebook Open Graph protocol\footnote{Facebook Open Graph protocol: \url{https://developers.facebook.com/docs/opengraph/}} and the Google Data API\footnote{Google Data API: \url{https://developers.google.com/gdata/}} are RESTful, and 
with some creativity they can be broadly interpreted as Linked Data. Hence, like the Web of 
hypertext, the Web of Linked Data does not need to justify itself solely on 
its own data.

\subsection{An Architecture for Consuming Linked Data}

Data owners may want to publish their data as Linked Data. Publishing 
Linked Data is no more difficult than building a traditional Web page. The 
developer should provide an RDF view of a dataset rather than an HTML 
view~\cite{Bizer2004}. Data from diverse sources such as 
Wikipedia~\cite{Bizer2009c} and governments~\cite{Shadbolt2012} 
can be lifted to the Web of Linked Data.

Data consumers may not own data, but have a data centric service to deliver.
 Data consumers can consume data from many Linked Data sources then 
exploit links between datasets. Consuming Linked Data is the main focus in 
our work. In Figure~\ref{figure:architecture}, we describe a simple 
architecture for an application that consumes Linked Data. The architecture 
is an extension of the traditional Web architecture.

\begin{figure}[hbt]
\quad
\xymatrix{
*++[F-,]{
 \mbox{RDF}
}
&&
\txt{
  background \\
  processes
}
\ar[ll]_-{\mbox{REST}}
\ar[rr]^-{\mbox{SPARQL}}
&&
*++[F-:<3pt>]{
 \mbox{triple store}
}
&&
\txt{
  front end
}
\ar[ll]_-{\mbox{SPARQL}}
\ar[r]^-{\mbox{Ajax}~~}
&
*++[F-,]{\mbox{HTML}}
}
\caption{A simple but effective architecture for an application that consumes Linked Data.
%The front end is the traditional Web architecture, with SPARQL instead of SQL. The back end in the novel part that exploits Linked Data.
}
\label{figure:architecture}
\end{figure}
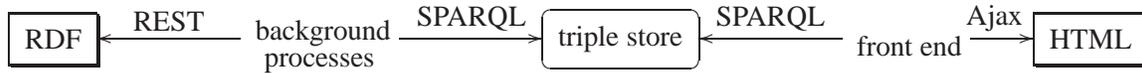
At the heart of our application is a \textit{triple store}, which replaces the more 
traditional relational database. A triple store is optimized for storing 
RDF data in subject-property-object form corresponding to a labelled edge 
in a graph. There are several commercial grade triple stores including 
Sesame, Virtuoso and 4store~\cite{Broekstra2002,Erling2007,Harris2009}, which can operate 
on scales of billions of triples, i.e.\ enough for almost any Web 
application.

The front end of our application follows the traditional Web architecture 
pattern, where Web pages are generated from a database using scripts. The 
only difference is that our application uses SPARQL Query instead of SQL to read 
from the triple store. The syntax of SPARQL is similar to SQL, hence it is 
easy for an experienced Web developer to develop the front end. Most 
popular Web development frameworks, such as Ruby on Rails, have been 
extended to support SPARQL. The main reason that SQL is replaced with 
SPARQL is that a triple store typically stores data from heterogeneous data 
sources. Heterogeneous data sources are difficult to combine and query 
using tables with relational schema; whereas combining and querying graph 
models is more straightforward. Thus links between datasets can be more easily exploited in queries.

%Hence no further work is required here.

The key novel feature of an application that consumes Linked Data is the 
back end. At the back end, background processes can crawl the Web of Linked 
Data to discover new and relevant Linked Data sources. Back end processes also keep local Linked 
Data up-to-date. For example, data from a news feed may change hourly and changes are made to Wikipedia several times every second.
To consume data relevant to the application, the background processes should 
be programmable. This work focuses in particular on high level programming 
languages that can be used when programming the back end of an application. The 
background processes must be able to dereference Linked Data and update 
data in the triple store, as well as make decisions based on query results.

%\section{Levels of Abstraction for Background Processes}

\subsection{A Low Level Approach to Consuming Linked Data}

The back end of the application that consumes Linked Data should keep track of every URI  
accessed through the HTTP protocol. The HTTP header of an HTTP response 
contains information that can be used for discovering and maintaining 
Linked Data about a given URI. We describe the dereferencing of URIs at a 
low level, to be concrete about what a high level language should abstract.

Consider an illustrative example of dereferencing a URI. If we perform an 
HTTP GET on the URI \xres{Kazakhstan} with an HTTP header indicating that 
we accept the mime type \texttt{text/n3}, then we get a {\it 303 See Other 
response}.\footnote{Reproducible using: \texttt{curl -v -I -H 
"Accept:text/n3" http://dbpedia.org/resource/Kazakhstan}} A {\it 303 See Other 
response} means that you can get data of the serialisation type you 
requested at another location. If we now perform an HTTP GET request at the 
URI indicated by the {\it 303 See Other response} 
(\url{http://dbpedia.org/data/Kazakhstan.n3}), we get an HTTP 200 OK 
response including the following headers:
\begin{gather*}
\begin{array}{l}
\texttt{GET /data/Kazakhstan.n3 HTTP/1.1} \\
\texttt{Host: dbpedia.org} \\
\texttt{Accept: text/n3}
\end{array}
\qquad
\begin{array}{l}
\texttt{HTTP/1.1 200 OK} \\
\texttt{Date: Tue, 26 Mar 2013 15:39:49 GMT} \\
\texttt{Content-Type: text/n3}
\end{array}
\end{gather*}

From the response header (above right) we can tell that this URI 
successfully returned some RDF using the \texttt{n3} serialisation format. 
However, although the data was obtained from the second URI, the resource 
represented by \!\xres{Kazakhstan}\! is described in the data. %Information such as the date and time the URI was accessed and caching policies can optimise algorithms that periodically dereference a URI.
%For example, the dates and times a URI had been accessed can be used to calculate when a process should revisit the URI next to check whether it has changed, thus keeping local data up-to-date.

The {\it 303 See Other response} is one of several ways Linked Data may be 
published using a RESTful protocol~\cite{Heath2011}. Furthermore, some 
systems such as the Virtuoso triple store~\cite{Erling2007} use wrappers to 
extract RDF data from other data sources, such as the Google Data API or 
the Twitter API. %Wrappers (such as Virtuoso's Twitter cartridge) broaden the Web of Linked Data by making URIs such as \xtwitter{kbtu\_university} dereferenceable.
%This information could also be used to optimise future accesses to the first URI.
The programmer of a script that consumes Linked Data should not worry about 
details such as wrappers or the serialisation formats. Unfortunately, 
existing libraries for popular programming 
languages~\cite{Beckett2002,McBride2002} are at a low level of 
abstraction. We propose that a higher level language can hide the above 
details in a compiler that tries automatically to dereference URIs.

\subsection{A High Level Approach to Consuming Linked Data}

Here we consider languages that consume Linked Data at a higher level of 
abstraction. The only essential information is the URI to dereference and the data returned. %that is extracted from dereferencing a URI.
Other features of a high level language include control flow and queries to decide what other URIs to dereference.

Consider the following script (based loosely on SPARQL~\cite{Harris2013}).
The keyword \texttt{select} binds a term bound to variable $\var{x}$.
The \texttt{from named} keyword indicates that we dereference the URI indicated.
The \texttt{where} keyword indicates that we would like to match a given triple pattern.
Notice that the second \texttt{from named} keyword dereferences a URI bound to $\var{x}$ that is not known until the query patter is matched. %should also be dereferenced and loaded into the local triple store.
\[
\script{
 \fromnamed{\xres{Kazakhstan}} \\
 \select{\var{x}} \\
 \where{
  \nquad{\xres{Kazakhstan}}{\xres{Kazakhstan}}{\xdbp{capital}}{\var{x}}
 } \\
 \fromnamed{\var{x}} 
}
\]
The above script first ensures that data representing \!\xres{Kazakhstan}\! is stored in the named graph, also named \!\xres{Kazakhstan}\!, 
in the local triple store. If the URI has not been accessed before, then the 
URI is dereferenced. If the URI is dereferenced successfully, then the RDF 
data is stored in a named graph in the local triple store. 
Thus the data can be queried directly from the named graph in the local triple store.
%if we query from the URI again, then we can read directly from the local named graph. 
This gives our applications a local view of the Web of Linked Data. Note 
that, if the URI is not dereferenced successfully, then this information can 
be recorded to avoid future attempts to dereference the URI.

Having dereferenced the first URI, the query indicated in the \texttt{where} clause is evaluated.
The query consists of a named graph to read, indicated by the \texttt{graph} keyword, and a triple pattern.
In the pattern, the subject is the resource \!\xres{Kazakhstan}\!, the property is \!\xdbp{capital}\!, and the object is not bound.
This query should find 
exactly one binding for $\var{x}$ to proceed. The query proceeds by 
discovering a URI (in this case \!\xres{Astana}\!), and substituting this URI for 
$\var{x}$ everywhere in the script. After the substitution, the second \texttt{from named} keyword
dereferences the discovered URI and loads the data into the triple store, as described above.

The above script abstracts away several HTTP GET requests and possibly some 
redirects and mappings between formats. It also encapsulates details where 
the following SPARQL Query is sent to the local SPARQL endpoint.
\[
\begin{array}{l}
\select{\var{x}}~\fromnamed{\xres{Kazakhstan}}~\texttt{where}
\\\qquad\{~\nquad{\xres{Kazakhstan}}{\xres{Kazakhstan}}{\xdbp{capital}}{\var{x}}~\}~\texttt{limit}~1
\end{array}
\]
At a lower level of abstraction, the above query would return a results document 
indicating that $\var{x}$ has one possible binding. Another programming language 
would extract the binding from the results document, then use the binding in some code that dereferences the URI.
Just doing this simple task in Java for example takes 
several pages of code. The Java program also involves treating parts of the 
code, such as the above query as a raw string of characters, which means 
that even basic errors parsing the syntax cannot be checked at compile time.

We argue that using the high level script presented at the beginning of 
this section is simpler than using a general purpose language
with libraries concerned with details of HTTP GET requests, constructing queries from 
strings and extracting variable bindings from query results.
Furthermore, it is worth noting that the syntax of the script does not significantly 
depart from the syntax of the SPARQL recommendations. In this way, we 
explore the idea of a domain specific scripting language for background 
processes of an application that consumes Linked Data.
%Suppose that the system keeps track of all the URIs that have been accessed and whether they were dereferenceable URIs. We describe a simple finite automata that describes a strategy for reading data. There are three states: not accessed, dereferenceable, not dereferenceable. URIs start in the not accessed state. When a query is accessed If a query is performed on the given named graph, then 

%\section{Appropriate Types for Linked Data}

\begin{comment}
The previous section identifies an appropriate level of abstraction for a 
scripting language for consuming Linked Data. This section addresses 
whether dynamic or static typing can help to pick up routine programming 
errors in the language. The important observation is that queries in 
scripts are executed over a local view of the Web of Linked Data. In 
contrast, dereferencing obtains data from sources on the Web without any 
central control. Thus, we should dynamically type check data that is obtained 
by dereferencing URIs. By doing so, we can rely on our local data being 
type checked, allowing the use of type check scripts that access local data.
\end{comment}

\section{Simple Types in W3C Recommendations}
\label{section:types}

The W3C recommendations do not explicitly introduce a type system for 
Linked Data. However, there are some ideas in the RDF Schema~\cite{Brickley2004} and 
OWL~\cite{Hitzler2012} recommendations that can be used as a basis of a 
type system. %, three of which are identified in~\cite{Me2011}. 
Here we identify one of the simplest notion of a type, and justify the 
choice with respect to recommendations. The chosen notion of a type fits 
with types in Facebook's Open Graph.

\paragraph{Simple Datatypes.}
The only common component of the W3C recommendations in this section is the 
notion of a simple datatype. Simple datatypes are specified as part of the 
XML Schema recommendation~\cite{Biron2004}. A type system for Linked Data 
should be aware of the simple datatypes that most commonly appear, in 
particular \!\xsd{string}\!, \!\xsd{integer}\!, \!\xsd{decimal} and \!\xsd{dateTime}\!. 
All these types draw from disjoint lexical spaces, except \xsd{integer}\! is a 
subtype of \xsd{decimal}\!.
Note that we assume that \!\xsd{decimal}\!, \!\xsd{float}\! and \!\xsd{double}\! are 
different lexical representations of an abstract numeric type.

In the XML Schema recommendation, there is a simple datatype \!\xsd{anyURI}\!. 
This type is rarely actively used in ontologies -- the ontology for DBpedia 
only uses this type once as the range of the property \!\xdbp{review}\!. 
However, it unambiguously refers to any URI and nothing else, unlike 
\!\xrdfs{Resource}\! and \!\xowl{Thing}\! which, depending on the interpretation, may 
refer to more than just URIs or a subset of URIs. In this way, our type 
system is based on unambiguous simple datatypes, that frequently appear in 
datasets, such as DBpedia~\cite{Bizer2009c}.

\paragraph{Resource Description Framework.}
A URI is successfully dereferenced when we get a document from which we can 
extract RDF data~\cite{Carroll2004}. The basic unit of RDF is the 
\textit{triple}. An RDF triple consists of a subject, a property and an 
object. %like a simple sentence. 
The subject, property and object may be URIs, and the object may also be a 
simple datatype. Most triple stores support \textit{quadruples}, where a 
fourth URI represents either the \textit{named graph}~\cite{Carroll2005} or 
the \textit{context} from where the triple was obtained.

%???Mention language tags??

Note that the RDF recommendation allows nodes with a local identifier, 
called a blank node, in place of a URI. Blank nodes are frequently debated 
in the community~\cite{Mallea2011}, due to several problems they introduce. 
Firstly, deciding the equality of graphs with blank nodes is an NP-complete 
problem; and, more seriously, when data with blank nodes is consumed more 
than once, each time the blank node is treated as a new blank node. This can 
cause many unnecessary copies of the same data to be created. %, when consuming Linked Data.
We assume that our system assigns a new URI to each blank 
node in data consumed, hence do not introduce blank nodes into the local 
data model.

It is also important to note that the RDF specification has a vocabulary of 
URIs that have a distinguished role. Notably, the property \!\type\! is 
used to indicate a URI that classifies the resource. For example, the 
triple \!{\Kaz}{\type}{\xdbp{Country}}\! classifies the resource \!\Kaz\! as a 
\!\xdbp{Country}\!. Although the word ``type" is part of the URI for the 
property, we consider this triple to be part of the data format rather than 
part of our type system. In RDF, the word ``type" is used in the AI sense 
of a semantic network~\cite{Sowa2000}, rather than in a type theoretic 
sense. Since these ``types" can be changed like any other data, we make the 
design decision not to include them in our type system, because a 
type system is used for static analysis.

\paragraph{RDF Schema.}

The RDF Schema recommendation~\cite{Brickley2004} provides a core vocabulary for classifying resources using RDF. From this vocabulary we borrow only the top level class \!\xrdfs{Resource}\! and
%\xrdf{Property}, and
the property \!\xrdfs{range}\!. All URIs are considered to identify resources, hence we equate \!\xrdfs{Resource}\! and \xsd{anyURI}\!. % and statically all URIs the type \Res.
%, which is disjoint from other simple datatypes.
We define property types for URIs that are used in the property position of a triple.
A property type restricts the type of term that can be used in the object 
position of a triple. For example, according to the DBpedia ontology, the 
property \!\xdbp{populationDensity}\! has a range \!\xsd{decimal}\!. Thus our type 
system should accept that \!{\Kaz}{\xdbp{populationDensity}}\! $5.94$ is well 
typed. However, the type system should reject a triple with object 
\mstring{5.94} which is a string. 
% the following sentence is not clear (namely "this")
We use the notation $\Property{\xsd{decimal}}$ for URIs representing properties permitting numbers as objects.

The RDF Schema \!\xrdfs{domain}\! of a property is redundant for our type 
system because only URIs can appear as the subject of a triple, and all URIs 
are of type \!\Res\!.
Note that properties are resources because they may appear in data. For 
example, the triple \xdbp{populationDensity} \xrdfs{label} 
\literal{population density (/sqkm)}{en} provides a description 
of the property in English.

\paragraph{Web Ontology Language.}

The Web Ontology Language (OWL)~\cite{Hitzler2012} is mostly concerned 
with classifying resources, which is not part of our type system. The OWL 
classes that are related to our type system are \!\xowl{ObjectProperty}\!, \!\xowl{DataTypeProperty}\! and \!\xowl{Thing}\!. An \!\xowl{ObjectProperty}\! is a 
property permitting URIs as objects, i.e.\ the type 
$\Property{\Res}$ in our type system. An \!\xowl{DataTypeProperty}\! is a 
property with one of the simple datatypes as its value.

In OWL~\cite{Hitzler2012}, \!\xowl{Thing}\! represents resources that are neither 
properties nor classes. We decide to equate \!\xowl{Thing}\! with \!\Res\! in our 
type system. This way we unify \!\Res\!, \!\xrdfs{Resource}\! and \!\xowl{Thing}\! as 
the top level of all resources. 
%to allow statements to be made about properties, as in RDF Schema.
We do not consider any further features of OWL to be part of our type system.

\paragraph{SPARQL Protocol and RDF Query Language.}

The SPARQL suite of recommendations makes reference only to simple types. 
SPARQL Query~\cite{Harris2013} specifies the types of basic operations that 
are used to filter queries. For example, a regular expression can only apply 
to a string, and the sum to two integers is an integer. SPARQL Query 
treats all URIs as being of type URI. We also adopt this approach.
%although we assimilate the type of URIs with the RDF Schema class \xrdfs{Resource} and the OWL class \xowl{Thing}.

\paragraph{Open Graph Protocol.}

Facebook's Open Graph protocol uses an approach to Linked Data called 
microformats, where bits of RDF data are embedded into HTML documents. 
Microformats help machines to understand the content of the Web pages, which can 
be used to drive powerful searches. The Open Graph documentation states the 
following: ``properties have `types' which determine the format of their 
values."\footnote{\url{https://developers.facebook.com/docs/opengraph/property-types/} 
accessed on 27 March 2013.} In the terminology of the Open Graph 
documentation, the value of a property is the object of an RDF triple. The 
documentation explicitly includes the simple data types \!\xsd{string}\!, 
\!\xsd{integer}\!, etc, as types permitted to appear in the object position. 
This corresponds to the notion of type throughout this section.

\subsection{Local Type Checking for Dereferenced Linked Data}

We design our system such that, when a URI is dereferenced, only well typed 
queries are loaded into the triple store. This means that we can guarantee 
that all triples in the triple store are well typed.

Suppose that after dereferencing the URI \!\xres{Kazakhstan}\! we obtain the following two triples:
\[
\begin{array}{l}
\ntriple{\xres{Kazakhstan}}{\xdbp{demonym}}{\literal{Kazakhstani}{en}} %{\mbox{Kazakhstani}}
\\
\ntriple{\xres{Kazakhstan}}{\xdbp{demonym}}{\xres{Kazakhstani}}
\end{array}
\]
Suppose also that the property \!\xdbp{demonym}\! has the type 
$\Property{\xsd{string}}$. The first triple is well typed, hence it is 
loaded into the store in the named graph \!\Kaz\!. However, the second triple 
is not well typed, hence would be ignored. No knowledge of other triples 
loaded into the store is required, i.e.\ our type system is local.

Since only well typed triples are loaded into the local triple store, scripts that use data in the local triple store can rely on type properties. Thus, for example, if a script consumes the object of any triple with \!{\xdbp{demonym}}\! as the property, then the script can assume that the term returned will be a string. This allows some static analysis to be performed by a type system for scripts. This observation is the basis of the type system in the next section.

\begin{comment}
\[
\begin{array}{l}
\Kaz \colon \tjoin{\xdbp{Place}}{\xyago{ModernTurkicStates}},
\xdbp{Place} \colon \xrdfs{Class},
\xyago{ModernTurkicStates} \colon \xrdfs{Class}
\\
\vdash
\ntriple{\Kaz}{\type}{\xdbp{Place}}
\ntriple{\Kaz}{\type}{\xyago{ModernTurkicStates}}
\end{array}
\]
\end{comment}

\section{A Typed Scripting Language for Linked Data Consumers}
\label{section:system}

In this section, we define our language and type system. A grammar specifies the abstract syntax, and deductive rules specify the type system. We briefly outline the operational semantics, which is defined as a reduction system.

\subsection{Syntax}

We introduce a syntax for a typed 
high level scripting language that is used to consume Linked Data.

\paragraph{The Syntax of Types.}

% The syntax of types is simple. 
A type is either a simple datatype, or it is a property that allows a simple datatype as its range, as follows:
\begin{gather*}
\simpleType \Coloneqq \Res \mid \xsd{string} \mid \xsd{decimal} \mid \xsd{dateTime} \mid \xsd{integer}
\\[8pt]
\ttype \Coloneqq \simpleType \mid \Property{\simpleType}
\qquad
\variable \Coloneqq \var{x} \mid \var{y} \mid \hdots
\qquad
\Gamma \Coloneqq \epsilon \mid \var{x} \colon \ttype, \Gamma
\end{gather*}

If a URI with a property type is used in the property position of a triple, then the object of that triple can only take the value indicated by the property type. Property types are assigned to URIs using the finite partial function $\Ont{\,\cdot\,}$ from URIs to property types. This partial function can be derived from ontologies or inferred from data.

Type environments are defined by lists of assignments of variables to types. 
As in popular scripting languages, such as Perl and PHP, variables begin 
with a dollar sign.

\paragraph{The Syntax of Terms and Expressions.}

Terms are used to construct RDF triples. Terms can be URIs, variables or literals of a simple datatype, each of which is drawn from a disjoint pool of lexemes.
\begin{gather*}
\uri \Coloneqq 
% \left<\href{http://dbpedia.org/resource/URI}{\textit{http://dbpedia.org/resource/URI}}\right> \mid
 \xres{URI} \mid
 \hdots %\mbox{(see RFC3987~\cite{RFC3987})}
\qquad
\textit{integer} \Coloneqq 99 \mid \hdots
\qquad
\decimal \Coloneqq 99.9 \mid 0.999e2 \hdots
%\\
%\textit{literal} \Coloneqq \textit{string@\langtag}
%\langtag \Coloneqq \texttt{en} \mid \texttt{en-gb} \mid \hdots
%\qquad
\\[10pt]
\textit{string} \Coloneqq \mstring{WWV2013} \mid \literal{workshop}{en-gb} \hdots 
\qquad
\dateTime \Coloneqq \texttt{2013-06-6T13:00:00+01:00} \mid \hdots
\qquad
\\[10pt]
\langrange \Coloneqq \texttt{*} \mid \texttt{en} \mid \texttt{en-gb} \mid \hdots % \textit{\langrange-\langrange} \hdots
\qquad
\term \Coloneqq \variable \mid \uri \mid \lstring \mid \integer \mid \decimal \mid \dateTime
\\[10pt]
\textit{regex} \Coloneqq \texttt{WWV.*} \mid \hdots
\qquad
\expr \Coloneqq \term \mid \texttt{now} \mid \texttt{str}({\expr})  \mid \texttt{abs}(\expr) \mid {\expr + \expr} \mid {\expr - \expr}
\mid \hdots
\end{gather*}
%For example integer $99$, decimal $99.9$ or $0.999e2$, dateTime \texttt{2013-06-6T13:00:00+01:00}, 
Notice that strings may have a language tag, as defined by RFC4646~\cite{RFC4646}.
A language tag can be matched by a simple pattern, called a language range,
where \texttt{*} matches any language tag (e.g., \texttt{en} matches any 
dialect of English). Regular expressions over strings conform 
to the XPath recommendation~\cite{Malhotra2010}.

Expressions are formed by applying unary and binary functions over terms. 
The SPARQL Query recommendation~\cite{Harris2013} defines several standard 
functions including \texttt{str}, which maps any term to a string, and 
\texttt{abs}, which takes the absolute value of a number. The expression 
\texttt{now} represents the current date and time. The vocabulary of 
functions may be extended as required.

\paragraph{The Syntax of Scripts.}

Scripts are formed from boolean expressions, data and queries. They define a sequence of operations that use queries to determine what URIs to dereference.
\begin{gather*}
\begin{array}{rl}
\lboolean \Coloneqq & \lboolean \logor \lboolean \mid \lboolean \logand \lboolean \mid \neg \lboolean \\
               \mid & \regex{\expr}{\textit{regex}} \mid \langMatches{\expr}{\langrange} \mid \expr = \expr \mid \expr < \expr \mid \hdots 
\\[10pt]
\ltriples \Coloneqq & {\term}~{\term}~{\term} \mid \ltriples~\ltriples
\qquad \ldata \Coloneqq \texttt{graph}\,\term\,\{ \ltriples \} \mid \ldata~\ldata
\\[10pt]
\query \Coloneqq & \ldata \mid \lboolean \mid \query~\query \mid \query \union \query
\end{array}
%\variables \Coloneqq \variable \mid \variables \variables
\\[10pt]
\begin{array}{rlr}
\lscript \Coloneqq & \texttt{where}~\query~\lscript & \mbox{\it Satisfy a query pattern before continuing.} \\
              \mid & \fromnamed{\term}~\lscript & \mbox{\it Dereference a URI and load it into the local triple store.} \\
              \mid & \select{\variable \colon \ttype}~\lscript & \mbox{\it Select a binding for a variable to enable progress.} \\ 
              \mid & \iterate~\lscript & \mbox{\it Iteratively execute the script using separate data.} \\
              \mid & \cunit                          & \mbox{\it Successfully terminate.} \\
\end{array}
\end{gather*}

Data is represented as quadruples of terms, which always indicate the named graph. In SPARQL, the \texttt{graph} and \texttt{from named} keywords work in tandem. The \texttt{from named} keyword makes data from a named graph available, whilst keeping track of the context. The keyword \texttt{graph} allows the query to directly refer to the context. This contrasts to the \texttt{from} keyword in SPARQL which fetches data without keeping the context. We extend the meaning of the \texttt{from named} keyword so that the URI is dereferenced and makes the data available from that point onwards in the context of the URI that is referenced.

Boolean expressions are called filters in the terminology of SPARQL. We 
drop the keyword \texttt{filter}, because the syntax is unambiguous without 
it. Filters can be used to match a string with a regular expression or 
language tag, or compare two expressions of the same type.

Queries are constructed from filters and basic graph patterns indicated by 
the keyword \texttt{where}. The basic graph pattern should be matched using 
data in the local triple store. The basic graph pattern may contain 
variables. Variables in a script must be bound using the \texttt{select} 
keyword. In this scripting language, the \texttt{select} keyword acts like 
an existential quantifier. %, where the variable should be replaced.
The variables bound by \texttt{select} are annotated with a type.
However these type annotations may be omitted by the programmer, since they can be algorithmically inferred using the type system. %The type inference algorithm is future work.

\subsection{The Type System}

\begin{figure}[b]%[hbt]
\xymatrix{
&
\Res
&
\\
\Property{\xsd{string}}\ar[ur]
&
\Property{\xsd{integer}}\ar[u]
&
\Property{\xsd{dateTime}}\ar[ul]
&
\Property{\Res}\ar[ull]
\\
%\Property{\xsd{boolean}}\ar[uur]
&
\Property{\xsd{decimal}}\ar[u]
}
\caption{Subtype relations between types that can be assigned to URIs.}
\label{figure:subtype}
\end{figure}
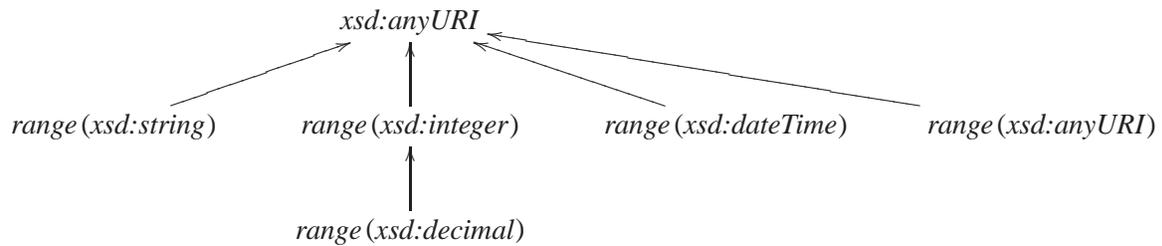

For existing implementations of SPARQL, a query that violates types 
silently fails, returning an empty result.
However, a type error is generally caused by an oversight by the programmer.
It would be helpful if a type 
error is provided at compiler time, indicating that a query has been designed such 
that the types guarantee that no result will ever be returned.
For this purpose, we introduce the type system presented in this section.

\paragraph{Subtypes.}

We define a subtype relation, that defines when one type can be treated as another type. The system indicates that \xsd{integer} is a subtype of \xsd{decimal}\!. It also defines property types to be contravariant, i.e.\ they reverse the direction of subtyping. In particular, if a property permits decimal numbers in the object position, then it also permits integers in the object position. %(but the converse does not necessarily hold).
\begin{gather*}
\begin{prooftree}
\justifies
\vdash \xsd{integer} \leq \xsd{decimal}
\end{prooftree}
\qquad
\begin{prooftree}
\justifies
\vdash \mtype \leq \mtype
\end{prooftree}
\\[12pt]
\begin{prooftree}
\vdash \simpleType_1 \leq \simpleType_2
\justifies
\vdash \Property{\simpleType_2} \leq \Property{\simpleType_1}
\end{prooftree}
\qquad
\begin{prooftree}
\justifies
\vdash \Property{\simpleType} \leq \Res
\end{prooftree}
\end{gather*}
\begin{comment}
\begin{prooftree}
\vdash \mtype_0 \leq \mtype_1
\quad
\vdash \mtype_1 \leq \mtype_2
\justifies
\vdash \mtype_0 \leq \mtype_2
\end{prooftree}
In the type system we can derive that type \Res is a supertype of all 
property types, since properties are URIs and we define all URIs to have 
the type \Res. Similarly the contravariance of property types is derivable 
from the type system. For clarity, we include these rules and present
\end{comment}
The subtype relations between types that can be assigned to URIs are 
summarised in Figure~\ref{figure:subtype}.

\paragraph{The type system for terms and expressions.}
Types for terms assign types to lexical tokens and assign types to properties using the partial function $\Ont{\,\cdot\,}$ from URIs to types. Types for expressions ensure that operations are only applied to resources of the correct type.
\begin{gather*}
\begin{prooftree}
\vdash \ttype_0 \leq \ttype_1
\justifies
\Gamma, \var{x} \colon \ttype_0 \vdash \var{x} \colon \ttype_1
\end{prooftree}
\qquad
\begin{prooftree}
\vdash \Ont{\uri} \leq \ttype
\justifies
\Gamma \vdash \uri \colon \ttype
\end{prooftree}
\qquad
\begin{prooftree}
\vdash \xsd{integer} \leq \simpleType
\justifies
\Gamma \vdash \integer \colon \simpleType
\end{prooftree}
\\[12pt]
\begin{prooftree}
\justifies
\Gamma \vdash \decimal \colon \xsd{decimal}
\end{prooftree}
\qquad
\begin{prooftree}
\justifies
\Gamma \vdash \lstring \colon \xsd{string}
\end{prooftree}
\qquad
\begin{prooftree}
\justifies
\Gamma \vdash \dateTime \colon \xsd{dateTime}
\end{prooftree}
%\Gamma \vdash \type \colon \Property{\Res}
\\[12pt]
\begin{prooftree}
\justifies
\Gamma \vdash \texttt{now} \colon \xsd{dateTime}
\end{prooftree}
\qquad
\begin{prooftree}
\Gamma \vdash \expr_1 \colon \simpleType
\quad
\Gamma \vdash \expr_2 \colon \simpleType
\quad
\vdash \simpleType \leq \xsd{decimal}
\justifies
\Gamma \vdash \expr_1 + \expr_2 \colon \simpleType
\end{prooftree}
\\[12pt]
\begin{prooftree}
\Gamma \vdash \expr \colon \simpleType
\justifies
\Gamma \vdash \texttt{str}(\expr) \colon \xsd{string}
\end{prooftree}
\qquad
\begin{prooftree}
\Gamma \vdash \expr \colon \simpleType
\quad
\vdash \simpleType \leq \xsd{decimal}
\justifies
\Gamma \vdash \texttt{abs}(\expr) \colon \simpleType
\end{prooftree}
\end{gather*}
\begin{comment}
\begin{prooftree}
\justifies
\Gamma \vdash \uri \colon \Res
\end{prooftree}
\end{comment}
The above types can easily be extended to cover all functions in the SPARQL recommendation, such as \texttt{seconds} which maps an \xsd{dateTime} to a \xsd{decimal}\!. Our examples include the custom function \texttt{haversine} which maps four expressions of type \xsd{decimal} to one \xsd{decimal}\!.

The type system for filters follows a similar pattern to expressions. For example, the type system ensures that only terms of the same type can be compared.
\begin{gather*}
\begin{prooftree}
\Gamma \vdash \expr \colon \xsd{string}
\justifies
\Gamma \vdash \regex{\expr}{\textit{regex}}
\end{prooftree}
\qquad
\begin{prooftree}
\Gamma \vdash \expr \colon \xsd{string}
\justifies
\Gamma \vdash \langMatches{\expr}{\langrange}
\end{prooftree}
\\[12pt]
\begin{prooftree}
\Gamma \vdash \expr_1 \colon \simpleType
\quad
\Gamma \vdash \expr_2 \colon \simpleType
\justifies
\Gamma \vdash \expr_1 = \expr_2
\end{prooftree}
\qquad
\begin{prooftree}
\Gamma \vdash \expr_1 \colon \simpleType
\quad
\Gamma \vdash \expr_2 \colon \simpleType
\justifies
\Gamma \vdash \expr_1 < \expr_2
\end{prooftree}
\\[12pt]
\begin{prooftree}
\Gamma \vdash \lboolean_0 %\colon \tBool
\quad
\Gamma \vdash \lboolean_1 %\colon \tBool
\justifies
\Gamma \vdash \lboolean_0 \logand \lboolean_1 %\colon \tBool
\end{prooftree}
\qquad
\begin{prooftree}
\Gamma \vdash \lboolean_0 %\colon \tBool
\quad
\Gamma \vdash \lboolean_1 %\colon \tBool
\justifies
\Gamma \vdash \lboolean_0 \logor \lboolean_1 %\colon \tBool
\end{prooftree}
\qquad
\begin{prooftree}
\Gamma \vdash \lboolean %\colon \tBool
\justifies
\Gamma \vdash \lognot \lboolean %\colon \tBool
\end{prooftree}
\end{gather*}

\paragraph{The type system for data.}

The types for terms are used to restrict the subject of triples to URIs, and the object to the type prescribed by the property. For quadruples, the named graph is always a URI, as prescribed by the type system.
\begin{gather*}
\begin{prooftree}
\Gamma \vdash \term_1 \colon \Res
\quad
\Gamma \vdash \term_2 \colon \Property{\simpleType}
\quad
\Gamma \vdash \term_3 \colon \simpleType
\justifies
\Gamma \vdash {\term_1}~~{\term_2}~~{\term_3}
\end{prooftree}
\\[12pt]
\begin{prooftree}
\Gamma \vdash \textit{triples}_1
\quad
\Gamma \vdash \textit{triples}_2
\justifies
\Gamma \vdash \textit{triples}_1~\textit{triples}_2
\end{prooftree}
\qquad
\begin{prooftree}
\Gamma \vdash \term_0 \colon \Res
\quad
\Gamma \vdash \textit{triples}
\justifies
\Gamma \vdash \texttt{graph}\,\term_0\,\{\,\textit{triples}\,\}
\end{prooftree}
\\[12pt]
\begin{prooftree}
\Gamma \vdash \term \colon \Property{\simpleType}
\justifies
\Gamma \vdash \ntriple{\term}{\xrdfs{range}}{\simpleType}
\end{prooftree}
\qquad
\begin{prooftree}
\Gamma \vdash \term \colon \Property{\Res}
\justifies
\Gamma \vdash \ntriple{\term}{\type}{\xowl{ObjectProperty}}
\end{prooftree}
\qquad
\end{gather*}

%For illustrative purposes, 
We include special type rules for two particular forms of triples. The first, from the RDF Schema vocabulary~\cite{Brickley2004}, allows the range of a property to be explicitly prescribed as a datatype. The second, from the OWL vocabulary~\cite{Hitzler2012}, prescribes when the range of a property is a URI. These two rules are useful for future work in type inference. In particular, when type information is not available, these rules would help infer the minimum partial function $\Ont{\,\cdot\,}$ that allows a dataset to be typed. %The details of the inference algorithm are future work.

\paragraph{The Type System for scripts.}
Scripts can be type checked using our type system. The rule for the 
\texttt{from named} keyword checks that the term to dereference is a URI. 
The rule for the \texttt{select} makes use of an assumption in the type 
environment to ensure that the variable is used consistently in the rest 
of the script, where the variable is bound.
\begin{gather*}
\begin{prooftree}
\Gamma \vdash \query_1 %\colon \tBGP
\quad
\Gamma \vdash \query_2 %\colon \tBGP
\justifies
\Gamma \vdash \query_1~\query_2 %\colon \tBGP
\end{prooftree}
\qquad
\begin{prooftree}
\Gamma \vdash \query_1 %\colon \tBGP
\quad
\Gamma \vdash \query_2 %\colon \tBGP
\justifies
\Gamma \vdash \query_1 \union \query_2 %\colon \tBGP
\end{prooftree}
\qquad
\begin{prooftree}
\Gamma \vdash \query
\quad
\Gamma \vdash \lscript
\justifies
\Gamma \vdash \texttt{where}~\query~\lscript
\end{prooftree}
\\[12pt]
\qquad
\begin{prooftree}
\Gamma \vdash \term \colon \Res
\quad
\Gamma \vdash \lscript
\justifies
\Gamma \vdash \fromnamed{\term}~\lscript
\end{prooftree}
\qquad
\begin{prooftree}
\Gamma, \var{x} \colon \ttype \vdash \lscript
\justifies
\Gamma \vdash \select{\var{x} \colon \ttype}~\lscript
\end{prooftree}
\qquad
\begin{prooftree}
\Gamma \vdash \lscript
\justifies
\Gamma \vdash \iterate~\lscript
\end{prooftree}
\qquad
\Gamma \vdash \cunit
\end{gather*}

If a script is well typed with an empty type environment, then 
all the variables must be bound using the rule for the \texttt{select} quantifier.
Furthermore, since the \texttt{select} quantifiers does not appear in data, no variables can appear in well typed data.
We assume that scripts and data are executable only if they are
well typed with respect to an empty type environment.

\subsection{Examples of Well Typed Scripts}
We consider some well typed scripts, and suggest some errors that our type system avoids.

%\paragraph{Language ranges.}

\begin{comment}
\[
\begin{array}{l}
\texttt{select}~\var{x}~\texttt{where}~\{
\\~~\begin{array}{l}
    \ntriple{\var{y}}{\xrdfs{label}}{\var{x}} \\
    \texttt{filter}~\texttt{langMatches}(\texttt{lang}(\var{x}), \texttt{"en"})
    \end{array}
\\
\}
\end{array}
\]
\end{comment}
%becomes the following neater query. Note the off-side rule is used here to ommit the curly brackets.

The following script is well typed. The script finds resources in any named graph that have a label in the Russian language. It then dereferences the resources. The script is iterated as many times as the implementation feels necessary, without revisiting data.
\[
\script{
 \iterate 
 ~~ \script{
      \select{\var{g} \colon \Res, \var{x} \colon \Res, \var{y} \colon \xsd{string}} \\
      \where{
       \nquad{\var{g}}{\var{x}}{\xrdfs{label}}{\var{y}} \\
       \langMatches{\var{y}}{\texttt{ru}}
      } \\
      \fromnamed{\var{x}}
    }
}
\]
Note that if we use the variable $\var{y}$ instead of $\var{x}$ in the 
\texttt{from named} clause, then the script could not be typed. The 
variable $\var{y}$ would need to be both a string and a URI, but it can only be one or the other. %, which is not possible without explicit coercions.
We assume that $\Ont{\xrdfs{label}} = 
\Property{\xsd{string}}$.

\begin{comment}
\[
\begin{array}{l}
\texttt{select}~\var{x}~\texttt{where}~\{
\\~~\begin{array}{l}
    \ntriple{\var{x}}{\xrdfs{label}}{\var{y}} \\
    \texttt{filter}~\texttt{langMatches}(\texttt{lang}(\var{y}), \texttt{"en"}) \\
    \texttt{filter}~\texttt{regex}(\texttt{str}(\var{y}), \texttt{"Alma.*"})
    \end{array}
\\
\}
\end{array}
\]
\end{comment}

The following well typed script looks in two named graphs. In named graph \!\Kaz\!, it looks for properties with \!\Kaz\! as the object, and in the named graph \!\xdbp{}\! it looks for properties that have either a label or comment that contains the string \texttt{"location"}.
\[
\script{
 \select{\var{p} \colon \Property{\Res}, \var{y} \colon \xsd{string}, \var{z} \colon \xsd{string}} \\
 \where{
    \{\\~~
     \begin{array}{l}
     \nquad{\xdbp{}}{\var{p}}{\xrdfs{label}}{\var{y}} \\
     \union \\
     \nquad{\xdbp{}}{\var{p}}{\xrdfs{comment}}{\var{y}}
     \end{array}\\
    \}\\
    \nquad{\Kaz}{\var{z}}{\var{p}}{\Kaz} \\
    \regex{\var{y}}{\texttt{location}} \logand
    \langMatches{\var{y}}{\texttt{en}}
 } \\
 \fromnamed{\var{p}}
}
\]
We must assume that $\Ont{\xrdfs{comment}} = \Property{\xsd{string}}$. If we assume otherwise, then the above query is not well typed. Note also that property $\var{p}$ must be of type $\Property{\Res}$.
%an \xowl{ObjectProperty}.
We cannot assign it a more general type such as \Res, although we can use it as a resource. 
\begin{figure}[h]
\[
\script{
 \fromnamed{\xres{Almaty}} \\
 \select{\var{almalat} \colon \xsd{decimal}, \var{almalong} \colon \xsd{decimal}} \\
 \where{
  \nquad{\xres{Almaty}}{\xres{Almaty}}{\xgeo{lat}}{\var{almalat}} \\
  \nquad{\xres{Almaty}}{\xres{Almaty}}{\xgeo{long}}{\var{almalong}}
 } \\
 \fromnamed{\xres{Kazakhstan}} \\
 \iterate 
 ~ \script{
    \select{\var{loc} \colon \Res} \\
    \where{
      \nquad{\Kaz}{\var{loc}}{\xdbp{location}}{\xres{Kazakhstan}}
    } \\
    \fromnamed{\var{loc}} \\
    \select{\var{lat} \colon \xsd{decimal}, \var{long} \colon \xsd{decimal}} \\
    \where{
      \nquad{\var{loc}}{\var{loc}}{\xgeo{lat}}{\var{lat}} \\
      \nquad{\var{loc}}{\var{loc}}{\xgeo{long}}{\var{long}} \\
      \texttt{haversine}(\var{lat},\var{long},\var{almalat},\var{almalong}) < 100
    } \\
    \iterate 
    ~  \script{
         \select{\var{person} \colon \Res} \\
         \where{
           \nquad{\var{loc}}{\var{person}}{\xdbp{birthPlace}}{\var{loc}}
         } \\
         \fromnamed{\var{person}}
       }
}
}
\]
\caption{Get data about people born in places in Kazakhstan less than 100km from Almaty.}
\label{figure:example}
\end{figure}

Finally, consider the substantial example of Figure~\ref{figure:example}. We assume that the function \texttt{haversine} calculates the distance (in km) between two points identified by their latitude and longitude. The script pulls in data about places located in Kazakhstan. It then uses this data to pull in more data about people born in places less than 100km from Almaty.
%This would be time consuming to get right in Java.%This would not be easy to program of type check.

%As in any programming language, the complexity of the script, is likely to lead to small type errors. Our type system can assist the Web developer.

\subsection{Operational Semantics for the System}

In related work, we extensively study the operational semantics of 
languages for Linked Data that are related to the language proposed in this 
work~\cite{Ciobanu2012,Dezani2012,Horne2011,Sassone2013,Gibbins2011}. In the remaining 
space, we briefly sketch the operational semantics for our scripting 
language.

%To model interactions between data and scripts, we introduce a syntax of systems. Data and system can be combined using two connective par $\|$ which permits interactions. % and tensor $\#$ which forbids interactions.
%\begin{gather*}
%\lsystem \Coloneqq \lscript \mid \ldata \mid \lsystem \cpar \lsystem %\mid \lsystem \atensor \lsystem
%\end{gather*}

Systems are data and scripts composed in parallel by using the $\|$ operator. 
The main rules are the rules for dereferencing URIs, for selecting bindings, and for interactions between a query and data.
The rules can be applied in any context.
\begin{gather*}
\begin{prooftree}
\vdash \texttt{graph}\,\uri\,\{ \ltriples \}
\justifies
\fromnamed{\uri}~\lscript
\longrightarrow
\lscript \cpar \texttt{graph}\,\uri\,\{ \ltriples \}
\end{prooftree}
\\[12pt]
\begin{prooftree}
\vdash \term \colon \ttype
\justifies
\select{\var{x} \colon \ttype}~\lscript
\longrightarrow
\lscript\sub{\var{x}}{\term}
\end{prooftree}
\qquad
\begin{prooftree}
\ldata \leq \query
\justifies
\texttt{where}\,{\query}~\lscript
\cpar
\ldata
\longrightarrow
\lscript
\cpar
\ldata
\end{prooftree}
\end{gather*}
The rules for dereferencing data involves a dynamic type check of 
arbitrary data that arrives as a result of dereferencing a URI. The 
\texttt{select} rule also performs a dynamic check to ensure that the term 
substituted for a variable is of the correct type. The query rule reads 
data that matches the query pattern (using a preorder $\leq$ over queries), 
without removing the data.

\begin{comment}
\begin{gather*}
\begin{prooftree}
	\mbox{\lstring matches \textit{regex}}
	\justifies
\regex{\lstring}{\textit{regex}} \longrightarrow \cunit
\end{prooftree}
\qquad
\begin{prooftree}
\justifies
 \query_1 \union \query_2
 \longrightarrow \query_1
\end{prooftree}
\qquad
\begin{prooftree}
\justifies
\query_1 \union \query_2
\longrightarrow \query_2
\end{prooftree}
\\[12pt]
\begin{prooftree}
\justifies
\script{
 \query_1 \\
 \query_2
}
\cpar
\BGP
\longrightarrow
\script{
 \query_1 \\
 \left(
  \query_2
  \cpar
  \BGP
 \right)
}
\end{prooftree}
\qquad
\begin{prooftree}
\vdash \term \colon \ttype
\justifies
\script{
 \select{\term \colon \ttype} \\
 P
}
\longrightarrow
P\sub{\var{x}}{\term}
\end{prooftree}
\qquad
\begin{prooftree}
\justifies
\BGP \cpar \co{\BGP} \longrightarrow \cunit
\end{prooftree}
\\[12pt]
\begin{prooftree}
\justifies
\langMatches{\lstring}{\texttt{*}} \longrightarrow \cunit
\end{prooftree}
\qquad
\begin{prooftree}
\justifies
\langMatches{\literal{\lstring}{\texttt{a}}, \texttt{a}} \longrightarrow \cunit
\end{prooftree}
\qquad
\begin{prooftree}
\justifies
\langMatches{\literal{\lstring}{\texttt{a}-\texttt{b}}, \texttt{x}-\texttt{b}} \longrightarrow \langMatches{\literal{\lstring}{\texttt{a}}}{\texttt{x}}
\end{prooftree}
\end{gather*}
\end{comment}

\medskip
%For the full system we can prove the following results.
The following result proves that a well typed term will always reduce to a well typed term. Thus the type system is sound with respect to the operational semantics.
\begin{theorem}
If $\vdash \lsystem_1$ and $\lsystem_1 \longrightarrow \lsystem_2$, 
then $\vdash \lsystem_2$.
\end{theorem}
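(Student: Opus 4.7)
The plan is to proceed by induction on the derivation of the reduction $\lsystem_1 \longrightarrow \lsystem_2$, with a standard preliminary substitution lemma. Since the paper indicates that systems are scripts and data composed with $\cpar$, and that reductions apply in any context, I would first spell out an auxiliary typing rule stating that $\vdash \lsystem_1 \cpar \lsystem_2$ iff $\vdash \lsystem_1$ and $\vdash \lsystem_2$, and prove a \emph{context closure} lemma: if $\vdash \mathcal{C}[\lsystem_1]$ and $\vdash \lsystem_1$ implies $\vdash \lsystem_2$, then $\vdash \mathcal{C}[\lsystem_2]$. This reduces the problem to the three base reduction rules.

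Before tackling the cases, I would establish a \textbf{substitution lemma}: if $\Gamma, \var{x} \colon \ttype \vdash \lscript$ and $\Gamma \vdash \term \colon \ttype$, then $\Gamma \vdash \lscript\sub{\var{x}}{\term}$. The proof is by induction on the typing derivation of $\lscript$, with the only nontrivial case being the variable rule: if $\var{x}$ was used at type $\ttype'$, then by inversion $\vdash \ttype \leq \ttype'$, and since $\Gamma \vdash \term \colon \ttype$, applying the appropriate rule (variable, URI, or literal) together with transitivity of $\leq$ gives $\Gamma \vdash \term \colon \ttype'$. Transitivity of $\leq$ itself is a routine induction on the subtyping rules of Figure~\ref{figure:subtype}.

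The three base cases are then straightforward. For the \texttt{from named} rule, inversion of $\vdash \fromnamed{\uri}~\lscript$ yields $\vdash \uri \colon \Res$ and $\vdash \lscript$, while the side condition $\vdash \texttt{graph}\,\uri\,\{\ltriples\}$ (the dynamic type check performed at dereferencing) supplies the typing of the newly loaded data; combining these with the parallel rule gives $\vdash \lscript \cpar \texttt{graph}\,\uri\,\{\ltriples\}$. For the \texttt{select} rule, inversion gives $\var{x} \colon \ttype \vdash \lscript$, and the side condition $\vdash \term \colon \ttype$ together with the substitution lemma yields $\vdash \lscript\sub{\var{x}}{\term}$. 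For the query rule, inversion of $\vdash \texttt{where}\,{\query}~\lscript \cpar \ldata$ immediately gives $\vdash \lscript$ and $\vdash \ldata$, so $\vdash \lscript \cpar \ldata$.

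The main obstacle will be the substitution lemma, specifically handling the interaction of subtyping with the variable position. One has to be careful that a variable declared at type $\ttype$ may legitimately be used at several supertypes throughout $\lscript$ (including, for a property variable, both at a property type and at $\Res$ via Figure~\ref{figure:subtype}), and the replacement term must in each case be retypable at those supertypes. This relies on the fact that the typing rules for URIs, literals and numerals all admit supertype weakening, which must be verified case by case. A secondary subtlety is the dynamic type check embedded in the \texttt{from named} and \texttt{select} reduction rules: one must argue that this side condition is exactly what is needed to re-establish the static typing invariant, which is the whole design rationale of mixing dynamic and static typing described in Section~\ref{section:types}.
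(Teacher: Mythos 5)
Your proposal is correct and follows essentially the same route as the paper's own proof sketch: a case analysis on the three reduction rules, with the \texttt{from named} and \texttt{where} cases discharged by inversion plus the parallel-composition rule, and the \texttt{select} case discharged by the same substitution lemma (if $\vdash \term \colon \ttype$ and $\var{x} \colon \ttype \vdash \lscript$ then $\vdash \lscript\sub{\var{x}}{\term}$). You are somewhat more thorough than the paper, which leaves the closure under contexts and the subtyping/transitivity details of the substitution lemma implicit, but these are elaborations of the same argument rather than a different approach.
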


\begin{proof}
We provide a proof sketch covering the cases for only the three rules above.

Consider the operational rule for \texttt{from named}. Assume that both $\vdash \fromnamed\,\uri~\lscript$ and $\vdash \texttt{graph}\,\uri\,\{ \ltriples \}$ hold. By the type rule for \texttt{from named}, $\vdash \lscript$ must hold. Thus, by the type rule for parallel composition, $\vdash \lscript \cpar \texttt{graph}\,\uri\,\{ \ltriples \}$.

\textit{Lemma.} If $\vdash \term \colon \ttype$ and $\var{x} \colon \ttype \vdash \lscript$, then $\vdash \lscript\sub{\var{x}}{\term}$, by structural induction. \hfill\qed

Consider the operational rule for \texttt{select}. Assume that $\vdash \term \colon \ttype$ and $\vdash \select{\var{x} \colon \ttype}~\lscript$ hold. By the type rule for \texttt{select}, $\var{x} \colon \ttype \vdash \lscript$ must hold. Hence, by the above lemma, $\vdash \lscript\sub{\var{x}}{\term}$ holds.

Consider the operational rule for \texttt{where}. Assume that $\vdash \texttt{where}\,\query~\lscript \cpar \ldata$ holds. By the type rule for parallel composition, $\vdash \texttt{where}\,\query~\lscript$ and $\vdash \ldata$ must hold, and, by the type rule for \texttt{where}, $\vdash query$ and $\vdash \lscript$ must hold. Hence $\vdash \lscript \cpar \ldata$ holds.
\end{proof}

\begin{comment}
\begin{prooftree}
\vdash \lscript
\justifies
\vdash \fromnamed \uri \lscript
\end{prooftree}

\begin{prooftree}
\vdash \lscript
\quad
\vdash \vdash \texttt{graph}\,\uri \{ \ldata \}
\justifies
\vdash \lscript \cpar \vdash \texttt{graph}\,\uri \{ \ldata \}
\end{prooftree}
\end{comment}

%The proof is a routine case analysis. This theorem is sufficient to recommend our type system.
\begin{comment}
Decidability follows from a straightforward induction that shows that no 
type rule violates the sub-formula property.

\begin{theorem}
The type system is decidable.
\end{theorem}
\end{comment}

Further cases and results will be covered in an extended paper.
Future work includes implementing an interpreter for the language based on the operational semantics, and developing a minimal type inference algorithm~\cite{Schwartzbach1991} based on the type system.

\section{Conclusion}

As the Web of Linked Data grows, the state of the art 
for commercial Linked Data solutions is also advancing.
State of the art of triple stores allow efficient execution of 
queries at scale. Furthermore, the back end of commercial solutions such as 
Virtuoso~\cite{Erling2007} and the Information Workbench~\cite{Haase2012} 
can extract data from diverse sources. This allows us to take a liberal 
view of the Web of Data, where data is drawn from data APIs provided by popular services 
from Twitter, Facebook and Google. Through experience with master 
students, we found that developers with experience of a Web 
development platform such as .NET or Ruby-on-Rails can assemble a 
front end in a matter of days, simply by shifting their query language 
from SQL to SPARQL.

Linked Data enables processes to
programmatically crawl the Web of Linked Data, pulling data from diverse 
sources and removing boundaries between datasets. The data pulled from the Web forms a local view of the Web of 
Linked Data that is tailored to a particular application. We found that 
existing programming environments, consisting of a general purpose language 
and a library, obstructed swift development of such processes with many 
low level details. This exposes the need for a high level language that 
makes scripting background processes that consume Linked Data easy. 
%as easy other aspects of developing an app that consumes Linked Data.

In this work, we introduce a domain specific high level language for 
consuming Linked Data. Domain specific languages are designed at a level 
of abstraction that simplifies programming tasks in the domain. In our 
domain specific language, key operations such as queries are primitive, 
meaning that basic syntactic checks can be performed. It is also easier to 
perform static analysis over a domain specific language.
We take care to design the syntax of the scripting language such that it resembles the SPARQL 
recommendations~\cite{Harris2013}, to appeal to the target Web developers.

We introduce a simple but effective type system for our language. The type 
system is based on the fragment of the SPARQL, RDF Schema and OWL 
recommendations that deals with simple data types. The applications can 
statically identify simple errors such as attempts to dereference a 
number, or attempts to match the language tag of a URI. For static type 
checking of scripts, the data loaded into the system must be dynamically 
type checked to ensure that properties have the correct literal value or a 
URI as the object.
The dynamic type checks do not impose significant restrictions on the data consumed.
Most datasets, including data from DBpedia, conform to 
this typing pattern. Further weight is added by the Facebook Open Graph 
protocol, which demands typing at exactly the level we deliver.

\paragraph{Acknowledgements.}
The work was supported by a grant of the Romanian National Authority for Scientific Research, CNCS-UEFISCDI, project number PN-II-ID-PCE-2011-3-0919.

\bibliographystyle{eptcs}
\bibliography{bibliography}

\begin{thebibliography}{10}
\providecommand{\bibitemdeclare}[2]{}
\providecommand{\surnamestart}{}
\providecommand{\surnameend}{}
\providecommand{\urlprefix}{Available at }
\providecommand{\url}[1]{\texttt{#1}}
\providecommand{\href}[2]{\texttt{#2}}
\providecommand{\urlalt}[2]{\href{#1}{#2}}
\providecommand{\doi}[1]{doi:\urlalt{http://dx.doi.org/#1}{#1}}
\providecommand{\bibinfo}[2]{#2}

\bibitemdeclare{article}{Beckett2002}
\bibitem{Beckett2002}
\bibinfo{author}{David \surnamestart Beckett\surnameend}
  (\bibinfo{year}{2002}): \emph{\bibinfo{title}{The design and implementation
  of the {Redland RDF} application framework}}.
\newblock {\sl \bibinfo{journal}{Computer Networks}}
  \bibinfo{volume}{39}(\bibinfo{number}{5}), pp. \bibinfo{pages}{577--588},
  \doi{10.1016/S1389-1286(02)00221-9}.

\bibitemdeclare{unpublished}{Berners-Lee1989}
\bibitem{Berners-Lee1989}
\bibinfo{author}{Tim \surnamestart Berners-Lee\surnameend}
  (\bibinfo{year}{1989}): \emph{\bibinfo{title}{Information management: A
  proposal}}.
\newblock \urlprefix\url{http://www.w3.org/History/1989/proposal.html}.

\bibitemdeclare{unpublished}{Berners-Lee2006}
\bibitem{Berners-Lee2006}
\bibinfo{author}{Tim \surnamestart Berners-Lee\surnameend}
  (\bibinfo{year}{2006}): \emph{\bibinfo{title}{{Linked data} --- design
  issues}}.
\newblock \urlprefix\url{http://www.w3.org/DesignIssues/LinkedData.html}.

\bibitemdeclare{article}{Berners-Lee2001}
\bibitem{Berners-Lee2001}
\bibinfo{author}{Tim \surnamestart Berners-Lee\surnameend},
  \bibinfo{author}{James \surnamestart Hendler\surnameend} \&
  \bibinfo{author}{Ora \surnamestart Lassila\surnameend}
  (\bibinfo{year}{2001}): \emph{\bibinfo{title}{The {Semantic Web}}}.
\newblock {\sl \bibinfo{journal}{Scientific American}}
  \bibinfo{volume}{284}(\bibinfo{number}{5}), pp. \bibinfo{pages}{28--37},
  \doi{10.1038/scientificamerican0501-34}.

\bibitemdeclare{techreport}{Biron2004}
\bibitem{Biron2004}
\bibinfo{author}{Paul~V. \surnamestart Biron\surnameend} \&
  \bibinfo{author}{Ashok \surnamestart Malhotra\surnameend}
  (\bibinfo{year}{2004}): \emph{\bibinfo{title}{{XML} Schema part 2: Datatypes
  Second Edition}}.
\newblock \bibinfo{type}{Recommendation}
  \bibinfo{number}{REC-xmlschema-2-20041028}, \bibinfo{address}{MIT, Cambridge,
  MA}.
\newblock \urlprefix\url{http://www.w3.org/TR/xmlschema-2/}.

\bibitemdeclare{inproceedings}{Bizer2004}
\bibitem{Bizer2004}
\bibinfo{author}{Christian \surnamestart Bizer\surnameend} \&
  \bibinfo{author}{Andy \surnamestart Seaborne\surnameend}
  (\bibinfo{year}{2004}): \emph{\bibinfo{title}{D2RQ-treating non-RDF databases
  as virtual RDF graphs}}.
\newblock In: {\sl \bibinfo{booktitle}{Proceedings of the 3rd International
  Semantic Web Conference (ISWC2004)}}, p.~\bibinfo{pages}{26},
  \doi{10.1038/npre.2011.5660.1}.

\bibitemdeclare{article}{Bizer2009c}
\bibitem{Bizer2009c}
\bibinfo{author}{Christian \surnamestart Bizer\surnameend} et~al.
  (\bibinfo{year}{2009}): \emph{\bibinfo{title}{{DBpedia}: A Crystallization
  Point for the {Web of Data}}}.
\newblock {\sl \bibinfo{journal}{Web Semantics: Science, Services and Agents on
  the World Wide Web}} \bibinfo{volume}{7}(\bibinfo{number}{3}), pp.
  \bibinfo{pages}{154--165}, \doi{10.1016/j.websem.2009.07.002}.

\bibitemdeclare{techreport}{Brickley2004}
\bibitem{Brickley2004}
\bibinfo{author}{Dan \surnamestart Brickley\surnameend} \&
  \bibinfo{author}{Ramanathan~V. \surnamestart Guha\surnameend}
  (\bibinfo{year}{2004}): \emph{\bibinfo{title}{{RDF} Vocabulary Description
  Language 1.0: {RDF Schema}}}.
\newblock \bibinfo{type}{recommendation}
  \bibinfo{number}{REC-rdf-schema-20040210}, \bibinfo{institution}{W3C},
  \bibinfo{address}{MIT, Cambridge, MA}.
\newblock \urlprefix\url{http://www.w3.org/TR/rdf-schema/}.

\bibitemdeclare{inproceedings}{Broekstra2002}
\bibitem{Broekstra2002}
\bibinfo{author}{Jeen \surnamestart Broekstra\surnameend},
  \bibinfo{author}{Arjohn \surnamestart Kampman\surnameend} \&
  \bibinfo{author}{Frank \surnamestart van Harmelen\surnameend}
  (\bibinfo{year}{2002}): \emph{\bibinfo{title}{Sesame: A Generic Architecture
  for Storing and Querying {RDF} and {RDF Schema}}}.
\newblock In: {\sl \bibinfo{booktitle}{International Semantic Web Conference}},
  pp. \bibinfo{pages}{54--68}, \doi{10.1007/3-540-48005-6\_7}.

\bibitemdeclare{article}{Carroll2005}
\bibitem{Carroll2005}
\bibinfo{author}{Jeremy~J. \surnamestart Carroll\surnameend},
  \bibinfo{author}{Christian \surnamestart Bizer\surnameend},
  \bibinfo{author}{Pat \surnamestart Hayes\surnameend} \&
  \bibinfo{author}{Patrick \surnamestart Stickler\surnameend}
  (\bibinfo{year}{2005}): \emph{\bibinfo{title}{Named graphs}}.
\newblock {\sl \bibinfo{journal}{Web Semantics: Science, Services and Agents on
  the World Wide Web}} \bibinfo{volume}{3}(\bibinfo{number}{4}), pp.
  \bibinfo{pages}{247--267}, \doi{10.1016/j.websem.2005.09.001}.

\bibitemdeclare{inproceedings}{Ciobanu2012}
\bibitem{Ciobanu2012}
\bibinfo{author}{Gabriel \surnamestart Ciobanu\surnameend} \&
  \bibinfo{author}{Ross \surnamestart Horne\surnameend} (\bibinfo{year}{2012}):
  \emph{\bibinfo{title}{A Provenance Tracking Model for Data Updates}}.
\newblock In: {\sl \bibinfo{booktitle}{FOCLASA}}, pp. \bibinfo{pages}{31--44},
  \doi{10.4204/EPTCS.91.3}.

\bibitemdeclare{article}{Dezani2012}
\bibitem{Dezani2012}
\bibinfo{author}{Mariangiola \surnamestart Dezani\surnameend},
  \bibinfo{author}{Ross \surnamestart Horne\surnameend} \&
  \bibinfo{author}{Vladimiro \surnamestart Sassone\surnameend}
  (\bibinfo{year}{2012}): \emph{\bibinfo{title}{Tracing where and who
  provenance in {Linked Data}: a calculus}}.
\newblock {\sl \bibinfo{journal}{Theoretical Computer Science}}
  \bibinfo{volume}{464}, pp. \bibinfo{pages}{113--129},
  \doi{10.1016/j.tcs.2012.06.020}.

\bibitemdeclare{inproceedings}{Erling2007}
\bibitem{Erling2007}
\bibinfo{author}{Orri \surnamestart Erling\surnameend} \& \bibinfo{author}{Ivan
  \surnamestart Mikhailov\surnameend} (\bibinfo{year}{2007}):
  \emph{\bibinfo{title}{{RDF} Support in the {Virtuoso DBMS}}}.
\newblock In: {\sl \bibinfo{booktitle}{CSSW}}, pp. \bibinfo{pages}{59--68},
  \doi{10.1007/978-3-642-02184-8\_2}.

\bibitemdeclare{article}{Fielding2002}
\bibitem{Fielding2002}
\bibinfo{author}{Roy~T. \surnamestart Fielding\surnameend} \&
  \bibinfo{author}{Richard~N. \surnamestart Taylor\surnameend}
  (\bibinfo{year}{2002}): \emph{\bibinfo{title}{Principled design of the modern
  {Web} architecture}}.
\newblock {\sl \bibinfo{journal}{ACM Transactions on Internet Technology}}
  \bibinfo{volume}{2}(\bibinfo{number}{2}), pp. \bibinfo{pages}{115--150},
  \doi{10.1145/514183.514185}.

\bibitemdeclare{inproceedings}{Haase2012}
\bibitem{Haase2012}
\bibinfo{author}{Peter \surnamestart Haase\surnameend},
  \bibinfo{author}{Michael~Schmidt \surnamestart
  Christian~H{\"u}tter\surnameend} \& \bibinfo{author}{Andreas \surnamestart
  Schwarte\surnameend} (\bibinfo{year}{2012}): \emph{\bibinfo{title}{The
  {Information Workbench} as a Self-Service Platform for {Linked Data}
  Applications}}.
\newblock In: {\sl \bibinfo{booktitle}{WWW 2012. Lyon, France}}.

\bibitemdeclare{inproceedings}{Harris2009}
\bibitem{Harris2009}
\bibinfo{author}{Steve \surnamestart Harris\surnameend}, \bibinfo{author}{Nick
  \surnamestart Lamb\surnameend} \& \bibinfo{author}{Nigel \surnamestart
  Shadbolt\surnameend} (\bibinfo{year}{2009}): \emph{\bibinfo{title}{4store:
  The design and implementation of a clustered {RDF} store}}.
\newblock In: {\sl \bibinfo{booktitle}{5th International Workshop on Scalable
  Semantic Web Knowledge Base Systems}}, pp. \bibinfo{pages}{94--109}.

\bibitemdeclare{techreport}{Harris2013}
\bibitem{Harris2013}
\bibinfo{author}{Steve \surnamestart Harris\surnameend} \&
  \bibinfo{author}{Andy \surnamestart Seaborne\surnameend}
  (\bibinfo{year}{2013}): \emph{\bibinfo{title}{{SPARQL} 1.1 Query Language}}.
\newblock \bibinfo{type}{Recommendation}
  \bibinfo{number}{REC-sparql11-query-20130321}, \bibinfo{institution}{W3C},
  \bibinfo{address}{MIT, MA}.
\newblock \urlprefix\url{http://www.w3.org/TR/sparql11-query/}.

\bibitemdeclare{book}{Heath2011}
\bibitem{Heath2011}
\bibinfo{author}{Tom \surnamestart Heath\surnameend} \&
  \bibinfo{author}{Christian \surnamestart Bizer\surnameend}
  (\bibinfo{year}{2011}): \emph{\bibinfo{title}{{Linked Data}: Evolving the Web
  into a Global Data Space}}.
\newblock \bibinfo{series}{Synthesis Lectures on the Semantic Web},
  \bibinfo{publisher}{Morgan {\&} Claypool Publishers}.
\newblock \urlprefix\url{http://dx.doi.org/10.2200/S00334ED1V01Y201102WBE001}.

\bibitemdeclare{article}{Hendler2007}
\bibitem{Hendler2007}
\bibinfo{author}{James~A. \surnamestart Hendler\surnameend}
  (\bibinfo{year}{2007}): \emph{\bibinfo{title}{Where Are All the Intelligent
  Agents?}}
\newblock {\sl \bibinfo{journal}{IEEE Intelligent Systems}}
  \bibinfo{volume}{22}(\bibinfo{number}{3}), pp. \bibinfo{pages}{2--3}.
\newblock
  \urlprefix\url{http://doi.ieeecomputersociety.org/10.1109/MIS.2007.62}.

\bibitemdeclare{techreport}{Hitzler2012}
\bibitem{Hitzler2012}
\bibinfo{author}{Pascal \surnamestart Hitzler\surnameend},
  \bibinfo{author}{Markus \surnamestart Kr{\"o}tzsch\surnameend},
  \bibinfo{author}{Bijan \surnamestart Parsia\surnameend},
  \bibinfo{author}{Peter~F. \surnamestart Patel-Schneider\surnameend} \&
  \bibinfo{author}{Sebastian \surnamestart Rudolph\surnameend}
  (\bibinfo{year}{2012}): \emph{\bibinfo{title}{{OWL 2 Web Ontology Language}
  Primer (Second Edition)}}.
\newblock \bibinfo{type}{Recommentation}
  \bibinfo{number}{REC-owl2-primer-20121211}, \bibinfo{institution}{W3C},
  \bibinfo{address}{MIT, Cambridge, MA}.
\newblock \urlprefix\url{www.w3.org/TR/owl2-primer/}.

\bibitemdeclare{inproceedings}{Horne2011}
\bibitem{Horne2011}
\bibinfo{author}{Ross \surnamestart Horne\surnameend} \&
  \bibinfo{author}{Vladimiro \surnamestart Sassone\surnameend}
  (\bibinfo{year}{2011}): \emph{\bibinfo{title}{A Verified Algebra for {Linked
  Data}}}.
\newblock In: {\sl \bibinfo{booktitle}{FOCLASA}}, pp. \bibinfo{pages}{20--33},
  \doi{10.4204/EPTCS.58.2}.

\bibitemdeclare{article}{Sassone2013}
\bibitem{Sassone2013}
\bibinfo{author}{Ross \surnamestart Horne\surnameend} \&
  \bibinfo{author}{Vladimiro \surnamestart Sassone\surnameend}
  (\bibinfo{year}{2013}): \emph{\bibinfo{title}{A Verified Algebra for
  Read-Write Linked Data}}.
\newblock {\sl \bibinfo{journal}{Science of Computer Programming}}.
\newblock \bibinfo{note}{To appear}.

\bibitemdeclare{incollection}{Gibbins2011}
\bibitem{Gibbins2011}
\bibinfo{author}{Ross \surnamestart Horne\surnameend},
  \bibinfo{author}{Vladimiro \surnamestart Sassone\surnameend} \&
  \bibinfo{author}{Nicholas \surnamestart Gibbins\surnameend}
  (\bibinfo{year}{2012}): \emph{\bibinfo{title}{Operational Semantics for
  SPARQL Update}}.
\newblock In: {\sl \bibinfo{booktitle}{The Semantic Web}}, {\sl
  \bibinfo{series}{Lecture Notes in Computer Science}} \bibinfo{volume}{7185},
  \bibinfo{publisher}{Springer}, pp. \bibinfo{pages}{242--257},
  \doi{10.1007/978-3-642-29923-0\_16}.

\bibitemdeclare{techreport}{Carroll2004}
\bibitem{Carroll2004}
\bibinfo{author}{Graham \surnamestart Klyne\surnameend} \&
  \bibinfo{author}{Jeremy \surnamestart Carroll\surnameend}
  (\bibinfo{year}{2004}): \emph{\bibinfo{title}{{Resource Description
  Framework}: Concepts and Abstract Syntax}}.
\newblock \bibinfo{type}{recommendation}
  \bibinfo{number}{REC-rdf-concepts-20040210}, \bibinfo{institution}{W3C},
  \bibinfo{address}{MIT, MA}.
\newblock \urlprefix\url{http://www.w3.org/TR/rdf-concepts/}.

\bibitemdeclare{techreport}{Malhotra2010}
\bibitem{Malhotra2010}
\bibinfo{author}{Ashok \surnamestart Malhotra\surnameend}, \bibinfo{author}{Jim
  \surnamestart Melton\surnameend}, \bibinfo{author}{Norman \surnamestart
  Walsh\surnameend} \& \bibinfo{author}{Michael \surnamestart Kay\surnameend}
  (\bibinfo{year}{2010}): \emph{\bibinfo{title}{XQuery 1.0 and XPath 2.0
  Functions and Operators (Second Edition)}}.
\newblock \bibinfo{type}{recommendation}
  \bibinfo{number}{REC-xpath-functions-20101214}, \bibinfo{institution}{W3C},
  \bibinfo{address}{MIT, MA}.
\newblock \urlprefix\url{www.w3.org/TR/xpath-functions/}.

\bibitemdeclare{inproceedings}{Mallea2011}
\bibitem{Mallea2011}
\bibinfo{author}{Alejandro \surnamestart Mallea\surnameend},
  \bibinfo{author}{Marcelo \surnamestart Arenas\surnameend},
  \bibinfo{author}{Aidan \surnamestart Hogan\surnameend} \&
  \bibinfo{author}{Axel \surnamestart Polleres\surnameend}
  (\bibinfo{year}{2011}): \emph{\bibinfo{title}{On Blank Nodes}}.
\newblock In: {\sl \bibinfo{booktitle}{International Semantic Web Conference
  (1)}}, pp. \bibinfo{pages}{421--437}, \doi{10.1007/978-3-642-25073-6\_27}.

\bibitemdeclare{article}{McBride2002}
\bibitem{McBride2002}
\bibinfo{author}{Brian \surnamestart McBride\surnameend}
  (\bibinfo{year}{2002}): \emph{\bibinfo{title}{Jena: A {Semantic Web}
  Toolkit}}.
\newblock {\sl \bibinfo{journal}{IEEE Internet Computing}}
  \bibinfo{volume}{6}(\bibinfo{number}{6}), pp. \bibinfo{pages}{55--59},
  \doi{10.1109/MIC.2002.1067737}.

\bibitemdeclare{techreport}{RFC4646}
\bibitem{RFC4646}
\bibinfo{author}{A.~\surnamestart Phillips\surnameend} \&
  \bibinfo{author}{M.~\surnamestart Davis\surnameend} (\bibinfo{year}{2006}):
  \emph{\bibinfo{title}{Tags for Identifying Languages}}.
\newblock \bibinfo{type}{Best Current Practice} \bibinfo{number}{RFC4646},
  \bibinfo{institution}{IETF}.
\newblock \urlprefix\url{http://www.ietf.org/rfc/rfc4646}.

\bibitemdeclare{inproceedings}{Schwartzbach1991}
\bibitem{Schwartzbach1991}
\bibinfo{author}{Michael~I \surnamestart Schwartzbach\surnameend}
  (\bibinfo{year}{1991}): \emph{\bibinfo{title}{Type inference with
  inequalities}}.
\newblock In: {\sl \bibinfo{booktitle}{Proceedings of the international joint
  conference on theory and practice of software development on Colloquium on
  trees in algebra and programming (CAAP '91): vol 1}},
  \bibinfo{series}{TAPSOFT '91}, \bibinfo{organization}{Springer}, pp.
  \bibinfo{pages}{441--455}.

\bibitemdeclare{article}{Shadbolt2006}
\bibitem{Shadbolt2006}
\bibinfo{author}{Nigel \surnamestart Shadbolt\surnameend},
  \bibinfo{author}{Wendy \surnamestart Hall\surnameend} \& \bibinfo{author}{Tim
  \surnamestart Berners-Lee\surnameend} (\bibinfo{year}{2006}):
  \emph{\bibinfo{title}{The {Semantic Web} revisited}}.
\newblock {\sl \bibinfo{journal}{{IEEE} intelligent systems}}
  \bibinfo{volume}{21}(\bibinfo{number}{3}), pp. \bibinfo{pages}{96--101},
  \doi{10.1109/MIS.2006.62}.

\bibitemdeclare{article}{Shadbolt2012}
\bibitem{Shadbolt2012}
\bibinfo{author}{Nigel \surnamestart Shadbolt\surnameend},
  \bibinfo{author}{Kieron \surnamestart O'Hara\surnameend},
  \bibinfo{author}{Tim \surnamestart Berners-Lee\surnameend},
  \bibinfo{author}{Nicholas \surnamestart Gibbins\surnameend},
  \bibinfo{author}{Hugh \surnamestart Glaser\surnameend},
  \bibinfo{author}{Wendy \surnamestart Hall\surnameend} \&
  \bibinfo{author}{\surnamestart m.~c. schraefel\surnameend}
  (\bibinfo{year}{2012}): \emph{\bibinfo{title}{Linked Open Government Data:
  Lessons from Data.gov.uk}}.
\newblock {\sl \bibinfo{journal}{IEEE Intelligent Systems}}
  \bibinfo{volume}{27}(\bibinfo{number}{3}), pp. \bibinfo{pages}{16--24},
  \doi{10.1109/MIS.2012.23}.

\bibitemdeclare{book}{Sowa2000}
\bibitem{Sowa2000}
\bibinfo{author}{John~F. \surnamestart Sowa\surnameend} (\bibinfo{year}{2000}):
  \emph{\bibinfo{title}{Knowledge representation: logical, philosophical and
  computational foundations}}.
\newblock \bibinfo{publisher}{Brooks/Cole Publishing Co.},
  \bibinfo{address}{Pacific Grove, CA, USA}.

\end{thebibliography}
\end{document}